\newcolumntype{C}{>{\centering\arraybackslash}X}
\theoremstyle{plain}
\newtheorem{theorem}{Theorem}
\def \bE {\mathbb{E}}
\newcommand{\Prob}{\mathbb{P}}
\definecolor{myblue}{rgb}{.8, .8, 1}
\definecolor{mathblue}{rgb}{0.2472, 0.24, 0.6} 
\definecolor{mathred}{rgb}{0.6, 0.24, 0.442893}
\definecolor{mathyellow}{rgb}{0.6, 0.547014, 0.24}
\newcommand{\barS}{{\bar{S}}}
\crefname{lemma}{Lemma}{Lemmas}
\Crefname{lemma}{Lemma}{Lemmas}
\crefname{thm}{Theorem}{Theorems}
\Crefname{thm}{Theorem}{Theorems}
\begin{document}

\title{A Fast MCMC for the Uniform Sampling of Binary Matrices with Fixed Margins}
\author{Guanyang Wang\thanks{Guanyang Wang is with the Department of Mathematics, Stanford University. Email: \{guanyang\}@stanford.edu.}}

\maketitle

%
%
%
%
%
%
\begin{abstract}
	Uniform sampling of binary matrix with fixed margins is an important and difficult problem in statistics, computer science, ecology and so on.  The well-known swap algorithm would be inefficient when the size of the matrix becomes large or when the matrix is too sparse/dense. 
	Here we propose the Rectangle Loop algorithm, a Markov chain Monte Carlo algorithm to sample binary matrices with fixed margins uniformly. Theoretically the Rectangle Loop algorithm is better than the swap algorithm in Peskun's order. Empirically studies also demonstrates the Rectangle Loop algorithm is remarkablely more efficient than the swap algorithm.
\end{abstract}
\section{Introduction}\label{sec:Introduction}
The problem of sampling binary matrices with fixed row and column sums has attracted much attention in numerical ecology. In ecological studies, the binary matrix is called \textit{occurrence matrix}. Rows usually corresponds to species, the columns, to locations. For example, the binary matrix shown on Table \ref{tab:finch} is known as ``Darwin's Finch'' dataset, which comes from Darwin's studies of the finches on the Galapagos islands (an archipelago in the East Pacific). The matrix represents the presence/absence of 13 species of finches in 17 islands. A $``1"$ or $``0"$ in entry $(i,j)$ indicates the presence or absence of species $i$ at island $j$. It is clear from Table \ref{tab:finch} that some pairs of species tend to occur together (for example, species 9 and 10) while some other pairs tend to be disjoint. Therefore, it is of our interest to investigate whether the cooperation/competition influences the distribution of species on islands, or the patterns found are just by chance.
\begin{table}\caption{Occurrence Matrix occurrence matrix of the finches  on the Galapagos islands.}
\label{tab:finch}
\begin{center}
\begin{tabular}{|c|l|l|l|l|l|l|l|l|l|l|l|l|l|l|l|l|l|}
\hline
\multicolumn{1}{|l|}{}      & \multicolumn{17}{c|}{Island}                                      \\ \hline
\multicolumn{1}{|l|}{Finch} & A & B & C & D & E & F & G & H & I & J & K & L & M & N & O & P & Q \\ \hline
1                           & 0 & 0 & 1 & 1 & 1 & 1 & 1 & 1 & 1 & 1 & 0 & 1 & 1 & 1 & 1 & 1 & 1 \\ \hline
2                           & 1 & 1 & 1 & 1 & 1 & 1 & 1 & 1 & 1 & 1 & 0 & 1 & 0 & 1 & 1 & 0 & 0 \\ \hline
3                           & 1 & 1 & 1 & 1 & 1 & 1 & 1 & 1 & 1 & 1 & 1 & 1 & 0 & 1 & 1 & 0 & 0 \\ \hline
4                           & 0 & 0 & 1 & 1 & 1 & 0 & 0 & 1 & 0 & 1 & 0 & 1 & 1 & 0 & 1 & 1 & 1 \\ \hline
5                           & 1 & 1 & 1 & 0 & 1 & 1 & 1 & 1 & 1 & 1 & 0 & 1 & 0 & 1 & 1 & 0 & 0 \\ \hline
6                           & 0 & 0 & 0 & 0 & 0 & 0 & 0 & 0 & 0 & 0 & 1 & 0 & 1 & 0 & 0 & 0 & 0 \\ \hline
7                           & 0 & 0 & 1 & 1 & 1 & 1 & 1 & 1 & 1 & 0 & 0 & 1 & 0 & 1 & 1 & 0 & 0 \\ \hline
8                           & 0 & 0 & 0 & 0 & 0 & 0 & 0 & 0 & 0 & 0 & 1 & 0 & 0 & 0 & 0 & 0 & 0 \\ \hline
9                           & 0 & 0 & 1 & 1 & 1 & 1 & 1 & 1 & 1 & 1 & 0 & 1 & 0 & 0 & 1 & 0 & 0 \\ \hline
10                          & 0 & 0 & 1 & 1 & 1 & 1 & 1 & 1 & 1 & 1 & 0 & 1 & 0 & 1 & 1 & 0 & 0 \\ \hline
11                          & 0 & 0 & 1 & 1 & 1 & 0 & 1 & 1 & 0 & 1 & 0 & 0 & 0 & 0 & 0 & 0 & 0 \\ \hline
12                          & 0 & 0 & 1 & 1 & 0 & 0 & 0 & 0 & 0 & 0 & 0 & 0 & 0 & 0 & 0 & 0 & 0 \\ \hline
13                          & 1 & 1 & 1 & 1 & 1 & 1 & 1 & 1 & 1 & 1 & 1 & 1 & 1 & 1 & 1 & 1 & 1 \\ \hline
\end{tabular}
\end{center}
\end{table}

Assuming different species have independent distributions on islands, then the observed binary matrix is simply a random sample from the uniform distribution of all the binary matrices with fixed margins. Table \ref{tab:fixed_sum} gives an example of all configurations of  $3\times 3$ binary matrices with $[1,2,1]$ as both row and column sums. Ideally, if we could list all the binary matrices with arbitrary size, then we could compare the pattern found in the observed matrix with others,   to conclude whether the observed matrix is simply by chance. However, enumerating matrices with fixed margins is often impractical both theoretically and computationally for moderate size of matrices. Therefore, sampling such random matrices becomes the natural choice. 

\begin{table}\caption{All possible $3\times 3$ binary matrices with $[1,2,1]$ as both row and column sums}
\label{tab:fixed_sum}
\begin{center}
\begin{tabular}{ c c c c c }
\toprule
A & B & C & D & E\\
\midrule\\
\addlinespace[-2ex]
$ \begin{bmatrix}  0 & 1 & 0  \\ 1 &  0 & 1 \\ 0 & 1 & 0 \end{bmatrix}$ & 
$ \begin{bmatrix}  0 & 1 & 0  \\ 1 &  1 & 0 \\ 0 & 0 & 1 \end{bmatrix}$ &
$ \begin{bmatrix}  1 & 0 & 0  \\ 0 &  1 & 1 \\ 0 & 1 & 0 \end{bmatrix}$ &
$ \begin{bmatrix}  0 & 1 & 0  \\ 0 &  1 & 1 \\ 1 & 0 & 0 \end{bmatrix}$ &
$ \begin{bmatrix}  0 & 0 & 1  \\ 1 &  1 & 0 \\ 0 & 1 & 0 \end{bmatrix}$ 
\\
\addlinespace[1.5ex]
\bottomrule
\end{tabular}
\end{center}
\end{table}

The problem of sampling such matrices also occurs in many other fields, with different names. For example, an equivalent formulation is uniformly sampling undirected bipartite graphs with given vertex degrees. A bipartite graph $G = (U,V,E)$ is a graph whose vertices are divided into two disjointed sets, denoted by $U = \{u_1, \cdots, u_m\}$, $V = \{v_1, \cdots, v_n\}$. $E$ is called the edge set where every edge connects one vertex in $U$ to one in $V$. The binary matrix $M = (m_{i,j})_{m\times n }$ is often called the  bi-adjacency matrix of $G$ and is defined by 
\[
m_{i,j} = \begin{cases}
1, \text{if there is an edge connecting}~ u_i ~\text{and}~ v_j,\\
0, \text{otherwise}.
\end{cases}
\]
Bipartite graphs are often used in network studies to model the interaction between two objects, for example, customers and products. It is often required to  sample graphs with preserved degree sequence in network analysis uniformly. Throughout this paper, we will use the term `binary matrix' instead of `bipartite graph' to avoid confusion, although they are equivalent.

The algorithms of sampling binary matrices with fixed margins are  divided into two classes. The first class of algorithms relies on the \textit{rejection sampling} or \textit{importance sampling} techniques, see \cite{snijders1991enumeration}, \cite{miklos2004randomization}, \cite{chen2005sequential}, \cite{harrison2013importance} \cite{holmes1996uniform} for examples. Importance sampling usually provides degree from non-uniform distribution, but it can be used to construct estimators to estimate the quantities of interest, such as the number of binary matrices with given margins. Chen et al. \cite{chen2005sequential} introduced a sequential importance sampling (SIS) scheme to test the hypothesis we mentioned at the beginning of this paper on ``Darwin's Finch" dataset. 

The second class falls into the \textit{Markov Chain Monte Carlo} (MCMC) category and will be our main focus in this paper. The well-known ``swap algorithm" has been used for decades. To the author's best knowledge, it is first introduced by Besag and Clifford \cite{besag1989generalized} in 1989 to solve a statistical testing problem. The swap algorithm has been formally proposed and analysed by Rao et al. \cite{rao1996markov} and \cite{kannan1999simple} in the 1990s. A similar question is to sample  matrices with non-negative integer entries, fixed row and column sums. Diaconis and Gangolli have proposed a random walk Metropolis algorithm \cite{diaconis1995rectangular}. Many variations and extensions of this algorithm are described by Diaconis and Sturmfels \cite{diaconis1998algebraic}.  

The swap method attempts to make a \textit{single} swap in each iteration, but when the matrix is large, or is mostly filled (or unfilled), the efficiency of swap algorithm can be relatively low. In 2008, Verhelst \cite{verhelst2008efficient} proposed a new MCMC algorithm based on the idea of performing multiple swaps per iteration. In 2014,  Strona et al. \cite{strona2014fast} introduced the ``Curveball algorithm", which uses a `fair trade' operation to replace the `swap' operation in the swap algorithm, aiming for a faster mixing. The mathematical formulation of Curveball algorithm is equivalent to Verhelst's algorithm, but with different implementation and reasoning. A nice survey and numerical comparisons of the existing algorithms can be found in a recent dissertation \cite{rechner2018markov}. The class of `multiple swaps' algorithms tends to improve the mixing time empiricially. However, each step of the `multiple swaps' algorithm is slower than the classical swap algorithm. Meanwhile, it is hard to compare the `multiple swaps' algorithms and classical swap algorithm theoretically, as the corresponding Markov-chains have complicated behaviors and are therefore  hard to analyze mathematically. The only existing result can be found in \cite{carstens2018speeding}.

In this paper, we introduce a novel algorithm called Rectangle Loop algorithm. The algorithm is based on the classical swap algorithm, with a careful utilization of the matrix structure given by margins. We have also proved the resulting Markov Chain dominates the classical chain used in the swap algorithm in the sense of Peskun's partial ordering \cite{peskun1973optimum}, and is easy to implement. Section \ref{sec:Existing Methods} gives a review of swap algorithm and Curveball algorithm, including the details of both algorithms and a discussion. In Section \ref{sec:rectangle_loop} we introduce our new algorithm --  Rectangle Loop algorithm. Section \ref{sec:theoretical_results} proves the theoretical properties of Rectangle Loop algorithm. Section \ref{sec: Simulation_results} gives numerical results.
\section{Existing Methods}\label{sec:Existing Methods}
\subsection{Swap Algorithm}
The swap algorithm, or equivalently, swap chain is based on the idea of  \textit{swapping checkerboard units}. Here a checkerboard unit is a two by two matrix with one of the following forms:
\[
\begin{pmatrix} 
1 & 0 \\
0 & 1 
\end{pmatrix},
\begin{pmatrix} 
0 & 1 \\
1 & 0
\end{pmatrix}.
\]
A swap means changing one checkerboard unit to the other.

Starting from an initial matrix, one chooses two rows and two columns uniformly at random among all rows and columns. If the resulting $2\times 2$ submatrix with entries in the intersection of these rows and columns is a checkerboard unit, it is swapped, otherwise, do nothing. 
\begin{algorithm}
\caption{Swap Algorithm}\label{alg:swap}
\hspace*{\algorithmicindent} \textbf{Input:} initial binary matrix $A_0$, number of iterations $T$ \\

\begin{algorithmic}[1]

\For{$t= 1,\cdots T$}
\State Choose two distinct rows and two distinct columns uniformly at random
\State \textbf{If}  the corresponding $2\times 2$ submatrix of $A_{t-1}$ is a checkerboard unit, i.e.
\[
\begin{pmatrix} 
1 & 0 \\
0 & 1 
\end{pmatrix} \qquad \text{or} \qquad
\begin{pmatrix} 
0 & 1 \\
1 & 0
\end{pmatrix} ,
\]

swap the submatrix
$\begin{pmatrix} 
1 & 0 \\
0 & 1 
\end{pmatrix}
\leftarrow \begin{pmatrix} 
0 & 1 \\
1 & 0
\end{pmatrix}$ or vice versa.

\textbf{Otherwise} $A_{t}  \leftarrow A_{t-1} $ 

\EndFor
\end{algorithmic}
\end{algorithm}

The swap algorithm is a Metropolis-type Markov chain Monte Carlo which converges to the uniform distribution. 

\subsection{Curveball Algorithm}

The swap algorithm can often be inefficient, taking Darwin's Finch data for example, there are ${13\choose2} {17\choose2} = 10608$ submatrices with size $2\times 2$, however, only about $3\%$ of them are swappable. This means it requires a very large $T$ (the number of iterations) to ensure the generated degree is close to uniformly distributed. The Curveball algorithm provides another solution.

\begin{algorithm}
\caption{Curveball Algorithm}\label{alg:Curveball}
\hspace*{\algorithmicindent} \textbf{Input:} initial binary matrix $A_0$, number of iterations $T$ \\

\begin{algorithmic}[1]
\For{$t= 1,\cdots T$}
\State Choose two distinct rows $r_a, r_b$ uniformly at random
\State Determine two disjoint sets 
\begin{align*}
 &S_{a-b}  \doteq \{k: A_t(a,k) = 1, A_t(b,k) = 0\} \\
 &S_{b-a} \doteq \{l: A_t(a,l) = 0, A_t(b,l) = 1\},\\
\end{align*}
here assuming $\lvert S_{i-j} \rvert \leq \lvert S_{j-i} \rvert$
 \State Choose a subset $V\subset S_{j-i}$ uniformly at random
 \State Set $A_{t+1} = A_{t}$ except for row $a, b$. 
 
 For row $a$:
 \[
 A_{t+1}(a, l) = \begin{cases}
1 \quad \text{if}~ l\in V\\
0 \quad \text{if}~ l\in S_{a-b} \cup S_{b-a} \setminus V \\
A(a,l) \quad \text{otherwise}
 \end{cases}
 \]
 
 For row $b$
 \[
A_{t+1} (b, k ) =  \begin{cases}
1 \quad \text{if}~ k\in S_{a-b} \cup S_{b-a} \setminus V\\
0 \quad \text{if}~ k\in  V \\
A(b,k) \quad \text{otherwise}
 \end{cases}
 \]
\EndFor
\end{algorithmic}
\end{algorithm}

The Curveball algorithm uses `trade' instead of `swap' operation in each iteration. Steps 3-5 in Algorithm \ref{alg:Curveball} gives an illustration of trading, it trades elements in column $V$ with elements in column $S_{a-b}$ for row $a$ and row $b$, preserving their row and column sums. Though seemingly complicated, there is a very intuitive explanation of the Curveball algorithm. We refer the readers to \cite{strona2014fast} for detailed illustrations.

\section{Rectangle Loop Algorithm}\label{sec:rectangle_loop}
The swap algorithm is proven to converge to uniform distribution. However, getting stuck at the same configuration is inefficient and thus the convergence could be very slow. For example, numerical experiments suggest that one would expect more than $30$ iterations before each successful swap using `Darwin's Finch' dataset. Assuming the randomly chosen row is mostly filled, such as row $1$ in Table \ref{tab:finch}, the two random chosen entries in this row would most likely be $[1,1]$ but the row of a `checkerboard unit' has to be either $[1,0]$ or $[0,1]$. Therefore swapping rarely happens when the chosen row/column is mostly filled (or equivalently, mostly unfilled).

The Rectangle Loop algorithm is designed to increase the chance of swapping. The idea is illustrated in Figure $1$. In this example the target matrix is of size $5\times 5$, with row names $R_1, \cdots, R_5$ and column names $C_1, \cdots, C_5$. In each step, we choose one row and one column uniformly at random (Step A). Suppose $R_2$ and $C_2$ is chosen, with corresponding entry $1$ , the red number in the top middle plot of Figure \ref{fig:Rectangle_loop}. Then we randomly choose a $0$ among all the $0$s in $R_2$ (Step B). Since there is only one $0$ in $R_2$, which is at location $C_4$, this is our only choice. Again, we scan through all the entries in the same column with the $0$ just chosen ($C_4$) and randomly choose a $1$ among all $1$s (Step C). In our example, the $1$s of $C_4$ are located at $R_1$ and $R_4$. Suppose we have chosen $(R_4, C_4)$. Now the three locations $(R_2, C_2), (R_2, C_4), (R_4, C_4)$ altogether give us the fourth one $(R_4, C_2)$, making the four entries a rectangle (Step D). If the fourth entry equals $0$, then we swap the submatrix as we did in swap method (Step E). Otherewise the fourth entries equals $1$ and the original matrix is not changed.  After Step A - E is iterated for many times, the resulting randomized matrices are used as representatives of uniformly distributed matrix with fixed margins.

The main difference between the Rectangle Loop algorithm and the swap algorithm is the sampling scheme. The Rectangle Loop algorithm is performing `conditional sampling', making it more efficient than swap method, which is doing `unconditional choosing'. For example, suppose both the swap method and Rectangle Loop algorithm have chosen $R_2$ and $C_2$, an entry with value $1$. Then $R_2$ has only one $0$ which is in column $4$. For the swap algorithm, the probability of correctly choosing $C_4$ is only $\frac 14$, as it is uniformly choosing among all columns.  The Rectangle Loop algorithm, however, as the mechanism guarantees we only sample from the zero entries, chooses $C_4$ with probability $1$. Therefore it significantly increase the swapping probability, leading to a faster convergence than the swap chain.

The details of the Rectangle Loop algorithm is described in Algorithm \ref{alg:Rectangle}. Noteworthy, when finding a $0$ entry, we sample $1$ with the same column as the $0$. When finding a $1$, we sample $0$ with the same row as the $1$. This `symmetric' design ensures the algorithm that converges to the correct distribution, as will be proved in Section \ref{sec:theoretical_results}. The paths of sampled entries in each iteration always form a rectangle, that is where the name `Rectangle Loop algorithm' comes from.

\begin{figure}[htbp]
\includegraphics[width= \textwidth]{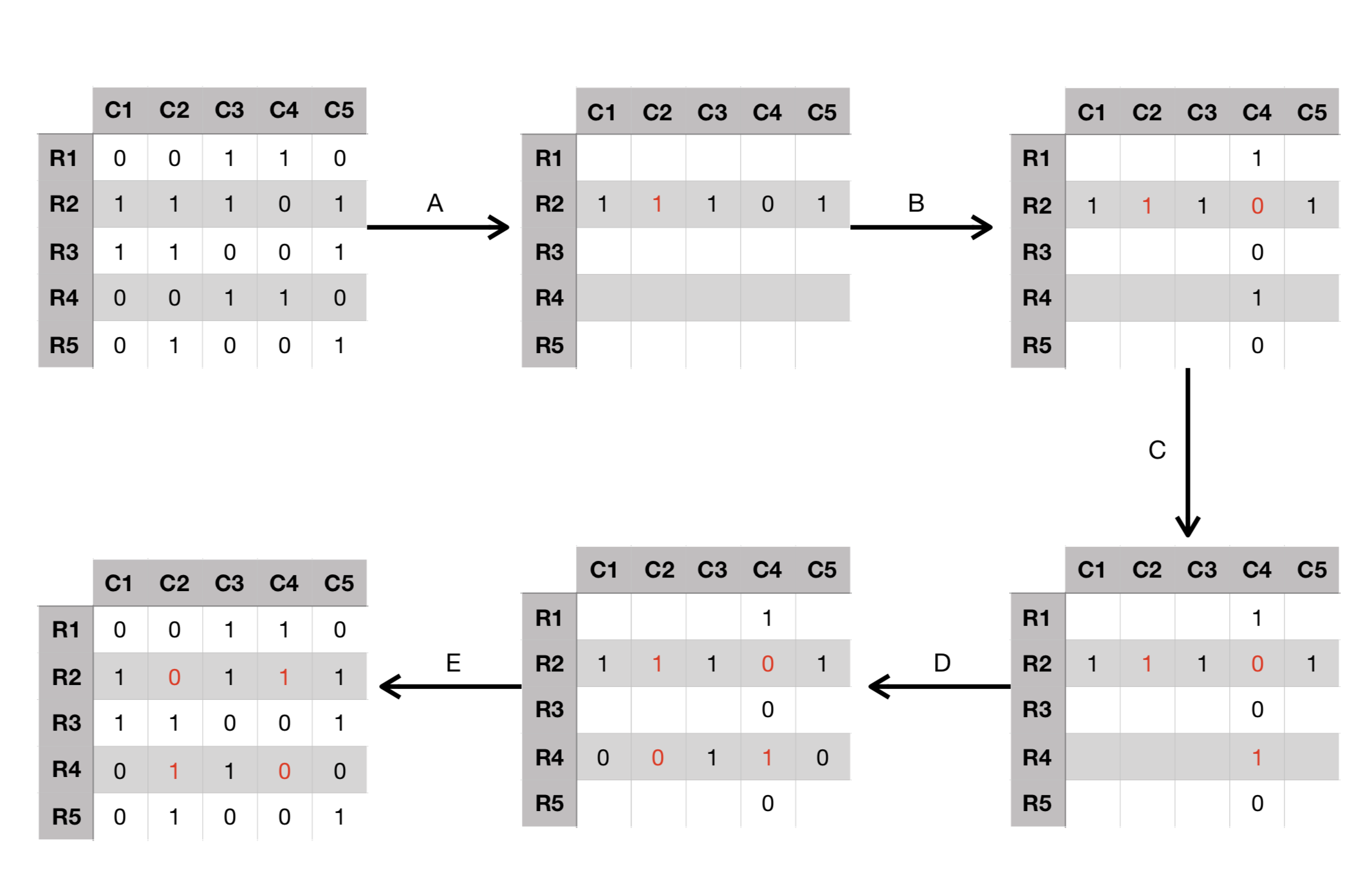}
\caption{An illustration of Rectangle Loop algorithm.}
\label{fig:Rectangle_loop}
\end{figure}

\begin{algorithm}
\caption{Rectangle Loop Algorithm}\label{alg:Rectangle}
\hspace*{\algorithmicindent} \textbf{Input:} initial binary matrix $A_0$, number of iterations $T$ \\

\begin{algorithmic}[1]
\For{$t= 1,\cdots T$}
\State Choose one row and one column $(r_1,c_1)$ uniformly at random
\If  {$A_{t-1}(r_1,c_1) = 1$}
\State Choose one column $c_2$ at random among all the $0$ entries in $r_1$
\State Choose one row $r_2$ at random among all the $1$ entries in $c_2$
\Else ~~ {$A_{t-1}(r_1,c_1) = 0$}
\State Choose one row $r_2$ at random among all the $1$ entries in $c_1$
\State Choose one column $c_2$ at random among all the $0$ entries in $r_2$
\EndIf
\If {the submatrix extracted from $r_1, r_2, c_1, c_2$ is a `checkerboard unit'}
\State Swap the submatrix
\Else ~~ {$A_t \leftarrow A_{t-1}$}
\EndIf
\EndFor
\end{algorithmic}
\end{algorithm}
\section{Theoretical Results}\label{sec:theoretical_results}
Given  row sums $\textbf{r} = (r_1, r_2, \cdots, r_m)$ and column sums $\textbf{c} = (c_1, c_2, \cdots, c_n)$, we define $\Sigma_{\textbf r, \textbf c}$ be the set of all matrices with row sums $\textbf r$ and column sums $\textbf c$. The suffcient and necessary condition for $\Sigma_{\textbf r, \textbf c}$ not being zero is given by Gale \cite{gale1957theorem}, Ryser \cite{ryser2009combinatorial} in 1957. We call two matrices $A$, $B$ `swappable' if one can transform to the other via one step swap algorithm. Equivalently, $A$ and $B$ only differs in a $2\times 2$ `checkerboard unit'.  For the sake of simplicity, we assume henceforth $0 < r_i < n, 0 < c_j < m$ for any $1\leq i \leq m, 1\leq j \leq n$, as otherwise we could simply delete that degenerate row/column. The following theorem characterizes the limit distribution and transition probability of the swap chain.
\begin{theorem}
Given $\textbf{r}, \textbf{c}$ and an initial matrix $A_0 \in \Sigma_{\textbf r, \textbf c}$, the swap algorithm defines a Metropolis-type Markov chain with stationary distribution $\text{Unif}(\Sigma_{\textbf r, \textbf c})$, transition kernel:
\[
\Prob( A , B) = \begin{cases}
 \frac{1}{{m \choose 2} {n \choose 2}} \qquad &\text{If $A$ and $B$ are swappable},\\ 
 1 - \frac{s(A)}{{m \choose 2} {n \choose 2}} \qquad &\text{If $B = A$}, s(A) \doteq \#\{\text{C: C and A are swappable}\}\\
 0 \qquad &\text{Otherwise}
\end{cases}
\]
and acceptance probability $1$.
\end{theorem}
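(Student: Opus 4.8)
The plan is to establish the three assertions — the explicit transition kernel, the uniform stationary distribution, and the interpretation as a Metropolis-type chain with acceptance probability $1$ — by a direct counting argument followed by a symmetry observation.

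First I would compute the kernel by counting proposals. Choosing two distinct rows and two distinct columns uniformly at random amounts to selecting an unordered pair of rows and an unordered pair of columns, so there are exactly ${m \choose 2}{n \choose 2}$ equally likely choices, each with probability $1/({m \choose 2}{n \choose 2})$. If $A$ and $B$ are swappable, then by definition they differ in exactly one $2\times 2$ checkerboard unit, and the two rows and two columns supporting that unit are uniquely determined; conversely, this is the only choice whose induced submatrix, once swapped, produces $B$. Hence exactly one of the ${m \choose 2}{n \choose 2}$ proposals carries $A$ to $B$, giving $\Prob(A,B) = 1/({m \choose 2}{n \choose 2})$. Since a single swap alters only one checkerboard unit, any $B \neq A$ not swappable from $A$ is unreachable in one step, so $\Prob(A,B)=0$. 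Finally the holding probability is $\Prob(A,A) = 1 - \sum_{C \text{ swappable with } A} \Prob(A,C) = 1 - s(A)/({m \choose 2}{n \choose 2})$, where $s(A)$ counts the matrices obtainable from $A$ by one swap; here I would note that distinct checkerboard submatrices of $A$ yield distinct swapped matrices, so $s(A)$ is counted without duplication.

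Next I would observe that the kernel is symmetric, $\Prob(A,B) = \Prob(B,A)$ for all $A,B \in \Sigma_{\mathbf r,\mathbf c}$. Swappability is a symmetric relation and the off-diagonal value $1/({m \choose 2}{n \choose 2})$ is independent of ordering, while the diagonal and the non-swappable cases are trivially symmetric. Because $\Prob$ is symmetric, the constant vector $\pi \equiv 1/|\Sigma_{\mathbf r,\mathbf c}|$ satisfies the detailed balance equations $\pi(A)\Prob(A,B) = \pi(B)\Prob(B,A)$, so $\text{Unif}(\Sigma_{\mathbf r,\mathbf c})$ is a reversible stationary distribution (equivalently, a symmetric stochastic matrix is doubly stochastic, which forces the uniform law to be stationary). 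To upgrade this to \emph{the} stationary distribution I would invoke irreducibility of the swap chain — the classical fact (Ryser, Gale) that any two elements of $\Sigma_{\mathbf r,\mathbf c}$ are connected by a finite sequence of swaps — together with aperiodicity, guaranteed by the strictly positive holding probability whenever $s(A) < {m \choose 2}{n \choose 2}$.

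For the Metropolis interpretation I would view the proposal as: draw a uniformly random pair of rows and columns and propose to swap the induced submatrix, a self-loop when it is not a checkerboard. This proposal kernel is symmetric and the target is uniform, so the Metropolis--Hastings acceptance ratio is $\min\{1,\, \pi(B)/\pi(A)\} = \min\{1,1\} = 1$; every proposed swap is accepted, which both identifies the chain as Metropolis-type and yields acceptance probability $1$. The main obstacle is the bookkeeping in the first step — verifying the exact one-to-one correspondence between swappable pairs $(A,B)$ and admissible row/column selections, and confirming that $s(A)$ counts neighbors without overcounting — since symmetry, stationarity, and the acceptance computation all follow cleanly once the kernel is pinned down.
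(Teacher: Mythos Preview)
Your proposal is correct and follows essentially the same approach as the paper: compute the off-diagonal transition probability by counting the unique row/column pair that realizes a swap, observe the symmetry $\Prob(A,B)=\Prob(B,A)$, and plug into the Metropolis--Hastings acceptance formula to get acceptance probability $1$. Your write-up is in fact more complete than the paper's --- you explicitly verify the holding probability, argue that $s(A)$ is counted without duplication, and invoke irreducibility (via Ryser/Gale) and aperiodicity to upgrade stationarity to uniqueness --- points the paper's proof leaves implicit.
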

\begin{proof}
When $A$ and $B$ are swappable, there exists two rows $i_1, i_2$ and two columns $j_1, j_2$ such that $A$ and $B$ only differs in the $2\times 2$ submatrix extracted by row $i_1, i_2$ and column $j_1, j_2$. Therefore the probability of swapping $A$ to $B$ equals
\[
\frac{1}{{m \choose 2} {n \choose 2}}.
\]
Recall that in the setting of Metropolis-Hastings, the acceptance probability from $A$ to $B$ is given by 
\[
\min\{1, \frac{\pi(B)\Prob(B,A)}{\pi(A)\Prob(A,B)},\}
\]
notice here the stationary distribution $\pi$ is designed to be $\text{Unif}(\Sigma_{\textbf r, \textbf c})$, $\Prob(A,B) = \Prob(B,A) = \frac{1}{{m \choose 2} {n \choose 2}}$. Hence it is clear that the swap algorithm is a Metropolis-type Markov chain with $\text{Unif}(\Sigma_{\textbf r, \textbf c})$ as stationary distribtuion and acceptance probability $1$, which justifies the correctness of the swap algorithm.
\end{proof}
The key point in swap algorithm is symmetry. When two different states $A,B$ are swappable, the associated transition probability is symmetric, i.e., 
\[
\Prob(A,B) = \Prob(B,A),
\]
this ensures the chain has acceptance probability $1$.  

In order to compare the efficiency of different Markov kernels with the same distribution, Peskun \cite{peskun1973optimum} first introduced the following partial-ordering. 

Let $\Prob_1$, $\Prob_2$ be two Markov transition kernels on the same state space $\mathcal S$ with same stationary distribution $\pi$, then $\Prob_1$ \textit{dominates} $\Prob_2$ \textit{off the diagonal}, $\Prob_1 \succeq \Prob_2 $, if 
\[
\Prob_1(x, A) \geq \Prob_2 (x, A)
 \]
 for all $x\in \mathcal S$ and $A$ measurable with $x\notin A $.
 
 When the state space is finite, as in our case, $\Prob_1 \succeq \Prob_2$ iff all the off-diagonal entries of $\Prob_1$ are greater than or equal to the corresponding off-diagonal entries of $\Prob_2$. This indicates $\Prob_1$ has lower probability to get stuck in the same state, and is exploring the state space in a more efficient way. The following theorem shows the rectangale loop algorithm also has uniform distribution as stationary distribution, and the corresponding chain dominates the swap chain off the diagonal. For simplicity, we will use $\Prob_s$ and $\Prob_r$ to denote transition kernel for the swap chain and rectangle loop chain, respectively, omitting its dependency on
 $\textbf{r},  \textbf{c}, \text{Unif}(\Sigma_{\textbf r, \textbf c}).$ 
 
 \begin{theorem}
 	Given $\textbf{r}, \textbf{c}$ and an initial matrix $A_0 \in \Sigma_{\textbf r, \textbf c}$, the Rectangale loop algorithm defines a Metropolis-type Markov chain with stationary distribution $\text{Unif}(\Sigma_{\textbf r, \textbf c})$. The transition kernel $\Prob_r$ dominates $\Prob_s$ off the diagonal.
 \end{theorem}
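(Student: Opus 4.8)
The plan is to mirror the structure of the proof for the swap chain: first obtain an explicit formula for the off-diagonal transition probabilities $\Prob_r(A,B)$, and then read off both stationarity and domination directly from that formula. I would begin by observing that a single iteration of Algorithm~\ref{alg:Rectangle} either leaves the matrix unchanged or flips exactly one checkerboard unit, so $\Prob_r(A,B) > 0$ for $A \neq B$ precisely when $A$ and $B$ are swappable. Recording this first disposes of every non-swappable off-diagonal pair (where both $\Prob_r$ and $\Prob_s$ vanish) and shows the rectangle loop chain has the same connectivity as the swap chain.

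Next I would compute $\Prob_r(A,B)$ for a swappable pair. Fix rows $i_1, i_2$ and columns $j_1, j_2$ so that $A$ carries the checkerboard $\begin{pmatrix} 1 & 0 \\ 0 & 1 \end{pmatrix}$ on these coordinates, with $A(i_1,j_1) = A(i_2,j_2) = 1$. The swap $A \to B$ occurs exactly when the sampled quadruple $(r_1,c_1,r_2,c_2)$ lands on these four corners, and there are four routes, one for each choice of the initial cell $(r_1,c_1)$ among the corners. Tracking the uniform choices in each branch, and recalling that row $i$ has $n - r_i$ zeros while column $j$ has $c_j$ ones, each branch contributes a product of the form $\frac{1}{mn}\cdot\frac{1}{n-r_i}\cdot\frac{1}{c_j}$; summing the four contributions yields the factorization
\[
\Prob_r(A,B) = \frac{1}{mn}\left(\frac{1}{n-r_{i_1}} + \frac{1}{n-r_{i_2}}\right)\left(\frac{1}{c_{j_1}} + \frac{1}{c_{j_2}}\right).
\]
The decisive point is that this depends only on $r_{i_1}, r_{i_2}, c_{j_1}, c_{j_2}$, which are unchanged by the swap; repeating the identical enumeration for $B \to A$ gives the same value, so $\Prob_r(A,B) = \Prob_r(B,A)$. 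Stationarity then follows exactly as for the swap chain: symmetry of $\Prob_r$ gives detailed balance with respect to $\text{Unif}(\Sigma_{\textbf r, \textbf c})$, equivalently a Metropolis acceptance probability of $1$, so the uniform distribution is stationary.

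For the domination $\Prob_r \succeq \Prob_s$ it remains to treat swappable pairs, where I would invoke the standing nondegeneracy assumption $0 < r_i < n$ and $0 < c_j < m$. These give $n - r_i \le n-1$ and $c_j \le m-1$, hence $\frac{1}{n-r_{i_1}} + \frac{1}{n-r_{i_2}} \ge \frac{2}{n-1}$ and $\frac{1}{c_{j_1}} + \frac{1}{c_{j_2}} \ge \frac{2}{m-1}$, so that
\[
\Prob_r(A,B) \ge \frac{1}{mn}\cdot\frac{2}{n-1}\cdot\frac{2}{m-1} = \frac{1}{{m \choose 2}{n \choose 2}} = \Prob_s(A,B).
\]

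I expect the only delicate step to be the enumeration behind the factored formula: one must verify that the four starting corners are the only routes producing this swap, handle the two code branches ($A(r_1,c_1)=1$ versus $A(r_1,c_1)=0$) correctly, and confirm that in each branch exactly one value of the remaining two samples realizes the target checkerboard, so that the stated products are exact rather than mere lower bounds. Once the formula is established, symmetry and the domination bound are immediate consequences of the preservation of margins under swapping together with the nondegeneracy assumption.
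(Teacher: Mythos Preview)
Your proposal is correct and follows essentially the same approach as the paper: both enumerate the four starting corners, obtain the same four-term expression for $\Prob_r(A,B)$, verify symmetry $\Prob_r(A,B)=\Prob_r(B,A)$, and bound below by $\Prob_s(A,B)$ using $r_i\ge 1$ and $c_j\le m-1$. Your presentation is in fact slightly cleaner, since you record the factorization $\frac{1}{mn}\bigl(\frac{1}{n-r_{i_1}}+\frac{1}{n-r_{i_2}}\bigr)\bigl(\frac{1}{c_{j_1}}+\frac{1}{c_{j_2}}\bigr)$ and deduce symmetry from margin-invariance, whereas the paper writes out $\Prob_r(B,A)$ separately and matches terms; but this is a cosmetic improvement on the same argument, not a different route.
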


\begin{proof}
Given any two swappable configurations $A$ and $B$, we are aiming to show $\Prob_r(A, B) = \Prob_r(B,A) \geq \Prob_s(A, B)$.

As $A, B$ are swappable, there exists two rows $i_1, i_2$ and two columns $j_1, j_2$ such that $A$ and $B$ only differs in the $2\times 2$ submatrix extracted by row $i_1, i_2$ and column $j_1, j_2$. Without loss of generality, we assume the checkerboard unit corresponding to $A$ has the form $\begin{bmatrix}
1 & 0 \\ 0 & 1
\end{bmatrix}$, as shown in Figure \ref{sec:theoretical_results}. Notice that there are four vertices of the $2\times 2$ submatrix and the Rectangle Loop algorithm chooses one arbitray row and column at its first step. This suggests the probability of transforming $A$ to $B$ equals the summation of four probabilities, each one corresponds to choosing one specific vertex of the `checkerboard unit' .

Figure \ref{fig:prob_A_B} illustrates the calculation of one path, starting from the vertex $(i_1, j_1)$. The possibility of choosing row $i_1$ and column $j_1$ is $\frac 1 {mn}$. Then one chooses a $0$ among all the $0$s in row $i_1$, and there are $n - r_{i_1}$ of them. Therefore the possibility of choosing column $j_2$ equals $\frac{1}{n - r_{i_1}}$. Similarly, after choosing $j_2$, one chooses a $1$ among all $1$s in column $j_2$, and there are $c_{j_2}$ of them. Hence the possibility of choosing row $i_2$ equals $\frac 1 {c_{j_2}}$. The fourth entry is fixed after determining the first three entries thus the last step has probability $1$. Multipling the possibilities above altogether,  the possibility of transforming $A$ to $B$, starting with $(i_1, j_1)$, equals
\[
\frac{1}{mn}\cdot \frac{1}{n - r_{i_1}} \cdot \frac 1 {c_{j_2}}.
\]
\begin{figure}[htbp]
	\includegraphics[width= \textwidth]{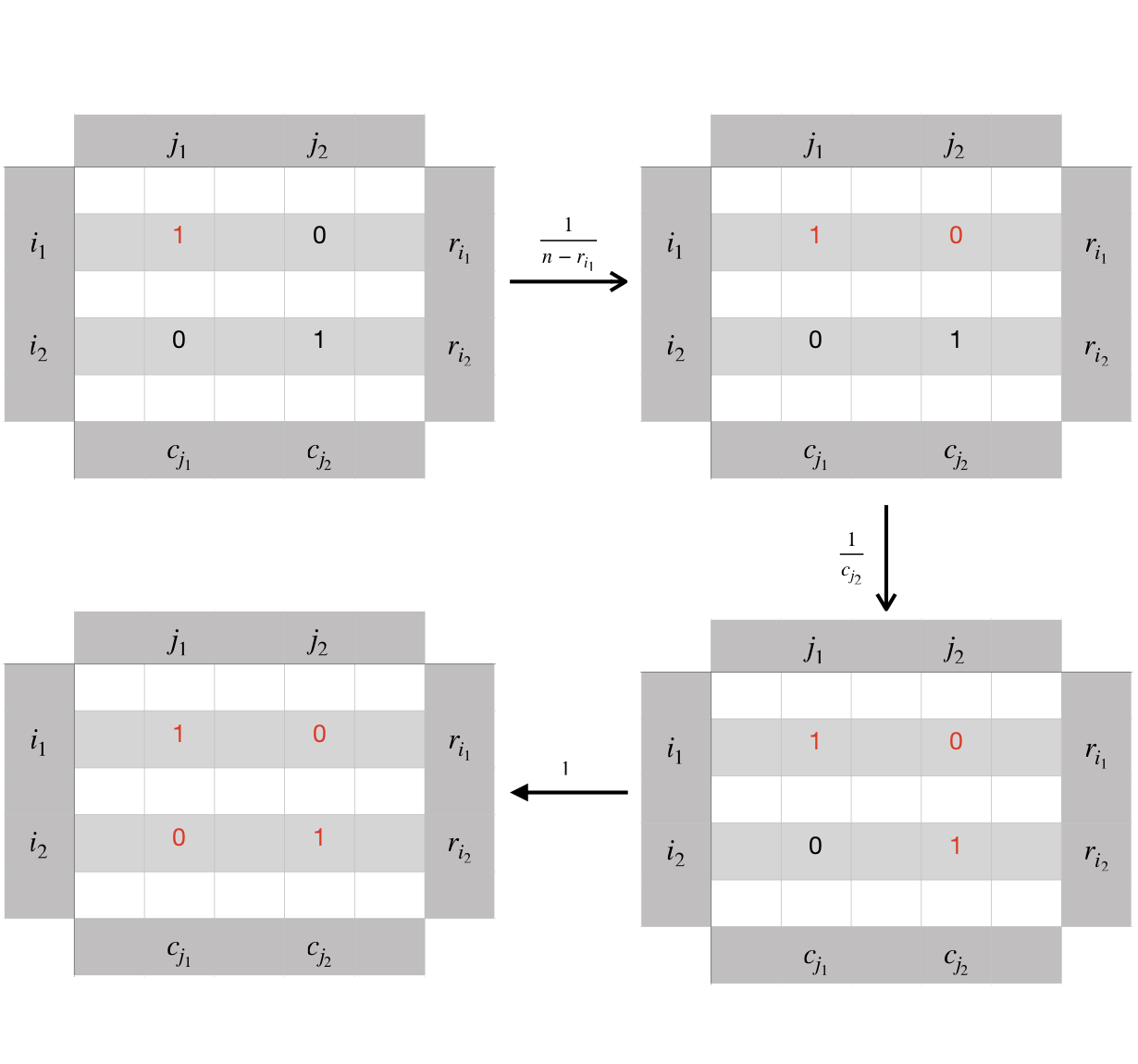}
	\caption{An illustration of calculating $\Prob_r(A,B)$ for a single path, starting from vertex $(i_1, j_1)$.}
	\label{fig:prob_A_B}
\end{figure}
The probability of transforming $A$ to $B$ with other starting vertex can be calculated accordingly. It turns out $\Prob_r(A,B)$ can be written as the following summation:
\[
\Prob_r(A,B) = \frac{1}{mn} \bigg( \frac{1}{n - r_{i_1}} \cdot \frac 1 {c_{j_2}} +  \frac 1 {c_{j_2}} \cdot  \frac{1}{n - r_{i_2}} +   \frac{1}{n - r_{i_2}} \cdot \frac 1 {c_{j_1}} +  \frac 1 {c_{j_1}} \cdot  \frac{1}{n - r_{i_1}} \bigg).
\]

Following the same strategy,  $\Prob_r(B,A)$ can also be calculated below. Figure \ref{fig:prob_B_A} illustrates the calculation of  $\Prob_r(B,A)$  starting from $(i_1, j_1)$.
\[
\Prob_r(B,A) = \frac{1}{mn} \bigg( \frac 1 {c_{j_1}} \cdot  \frac{1}{n - r_{i_2}} +   \frac{1}{n - r_{i_2}} \cdot \frac 1 {c_{j_2}} +   \frac 1 {c_{j_2}}\cdot \frac{1}{n - r_{i_1}} +   \frac{1}{n - r_{i_1}} \cdot \frac 1 {c_{j_1}} \bigg).
\]
\begin{figure}[htbp]
	\includegraphics[width= \textwidth]{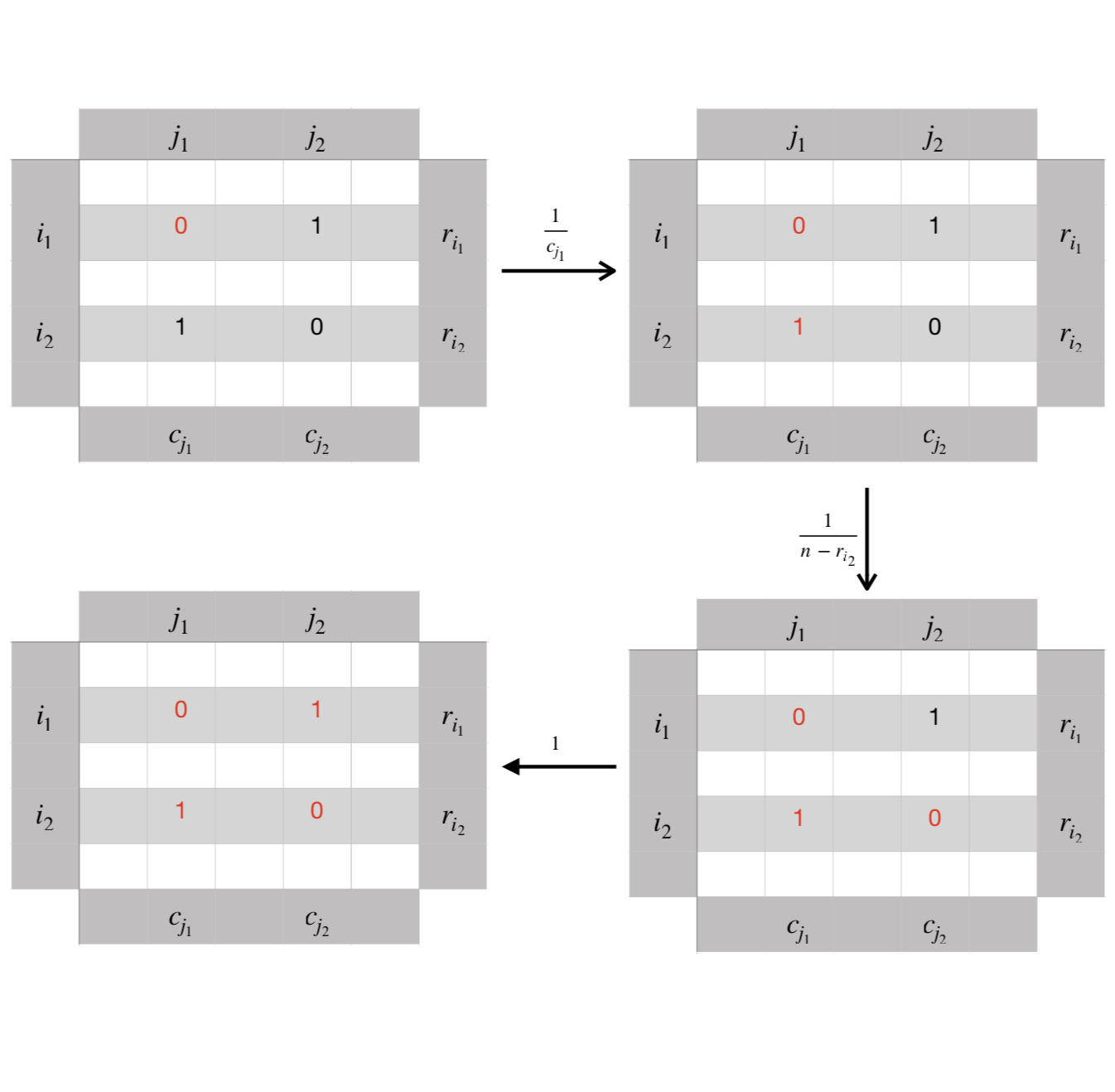}
	\caption{An illustration of calculating $\Prob_r(B,A)$ for a single path, starting from vertex $(i_1, j_1)$.}
	\label{fig:prob_B_A}
\end{figure}
After matching all the terms of $\Prob_r(B,A) $ with $\Prob_r(A,B)$, we conclude that  $\Prob_r(B,A)  = \Prob_r(A,B)$, which justifies the Rectangle Loop algorithm has Unif($\Sigma_{\textbf r, \textbf c}$) as stationary distribution.

To show $\Prob_r(A, B) \geq \Prob_s(A, B)$,  notice that $\Prob_s(A, B) =  \frac{1}{{m \choose 2} {n \choose 2}} = \frac 4 {m(m-1)n(n-1)}$. and 
\begin{align*}
\Prob_r(A,B) = &\frac{1}{mn} \cdot  \frac{1}{n - r_{i_1}} \cdot \frac 1 {c_{j_2}} + \frac{1}{mn} \cdot  \frac 1 {c_{j_2}} \cdot  \frac{1}{n - r_{i_2}} + \\
&  \frac{1}{mn} \cdot  \frac{1}{n - r_{i_2}} \cdot \frac 1 {c_{j_1}} + \frac{1}{mn} \cdot  \frac 1 {c_{j_1}} \cdot  \frac{1}{n - r_{i_1}} .\\
\end{align*}
It is clear that $\Prob_r(A,B)$ can be written as the summation of four terms. Each term in the summation is greater than or equal to $\frac 1 {m(m-1)n(n-1)}$. Therefore we conclude that $\Prob_r(A,B) \geq \Prob_s(A,B) $  for any swappable $A,B$. This indicates $\Prob_r \succeq \Prob_s$, the Rectangle Loop algorithm is exploring the state space in a more efficient way than the swap algorithm.
\end{proof}
\section{Simulation Results and Applications}\label{sec: Simulation_results}
\subsection{A  Concrete Example}
Now we use the example used in \cite{strona2014fast} and \cite{miklos2004randomization} to compare the existing algorithms and the Rectangle Loop algorithm. The example below is concrete. The transition matrix can be calculated explicitly and convergence can be assessed analytically.

The five matrices shown in Table \ref{tab:fixed_sum} are all possible configurations of  $3\times 3$ binary matrices with $[1,2,1]$ as both row and column sums. 
The transition matrices for swap algorithm, Curveball algorithm and Rectangale loop algorithm is shown in 
Table \ref{tab:transition}.

\begin{table}[htbp]
	
	\begin{center}
		\begin{tabular}{ c c c  }
			\toprule
			Swap & Curveball & Rectangle Loop \\
			\midrule\\
			\addlinespace[-2ex]
			$ \begin{bmatrix}  \frac5 9 & \frac19 & \frac19 & \frac19 &\frac19 \\ \frac 19&  \frac23 & 0 &\frac19 & \frac 19 \\ \frac19& 0 & \frac 23 &\frac19 &\frac19 \\ \frac 19 & \frac 19 & \frac19 &\frac 23 & 0\\
			\frac19 & \frac19 &\frac 19 &0 & \frac23
			\end{bmatrix}$ & 
			$ \begin{bmatrix}  
			\frac13 & \frac16 & \frac16 & \frac16 &\frac16  \\ \frac16 &  \frac13 & 0  & \frac16 &\frac13\\ \frac16 & 0 & \frac13 &\frac13 & \frac 16\\
			\frac 16 & \frac16 &\frac13 &\frac13 & 0 \\
			\frac 16 &\frac 13 &\frac16 &0 &\frac 13			 \end{bmatrix}$ &
			$ \begin{bmatrix}  0 & \frac14 & \frac14 &\frac14 &\frac14  \\ \frac14 &  \frac14 & 0 &\frac13 &\frac16 \\ \frac14& 0 & \frac14 & \frac 16 &\frac 13\\
			\frac 14 &\frac 13 &\frac16 &\frac 14 & 0\\
			\frac14 &\frac16 &\frac 13 &0 &\frac14			 \end{bmatrix}$ 
			\\
			\addlinespace[1.5ex]
			\bottomrule
		\end{tabular}
	\end{center}
	\caption{Transition matrices for swap algorithm (left), Curveball algorithm (middle), Rectangle Loop algorithm (right)}
	\label{tab:transition}
\end{table}
Figure \ref{fig:comparison_3by3} shows the comparison of the three algorithms. Here we measure the distance between transition kernel $\Prob$ and the stationary distribution $\pi$ by total variation distance:

\[
\max_{A\in \Sigma_{\textbf r, \textbf c}}\lVert \Prob^k(A,\cdot) - \pi \rVert_{\text{TV}} = \frac 12 \max_{A\in \Sigma_{\textbf r, \textbf c}} \sum_{B\in \Sigma_{\textbf r, \textbf c}}\lvert\Prob^k(A,B) - \pi(B) \rvert,
\]
where $k$ denotes the power.
It is clear that all the algorithms converge exponentially to uniform distribution, but the Rectangle Loop algorithm converges faster than swap algorithm and Curveball algorithm.
\begin{figure}[htbp]
	\includegraphics[width= \textwidth]{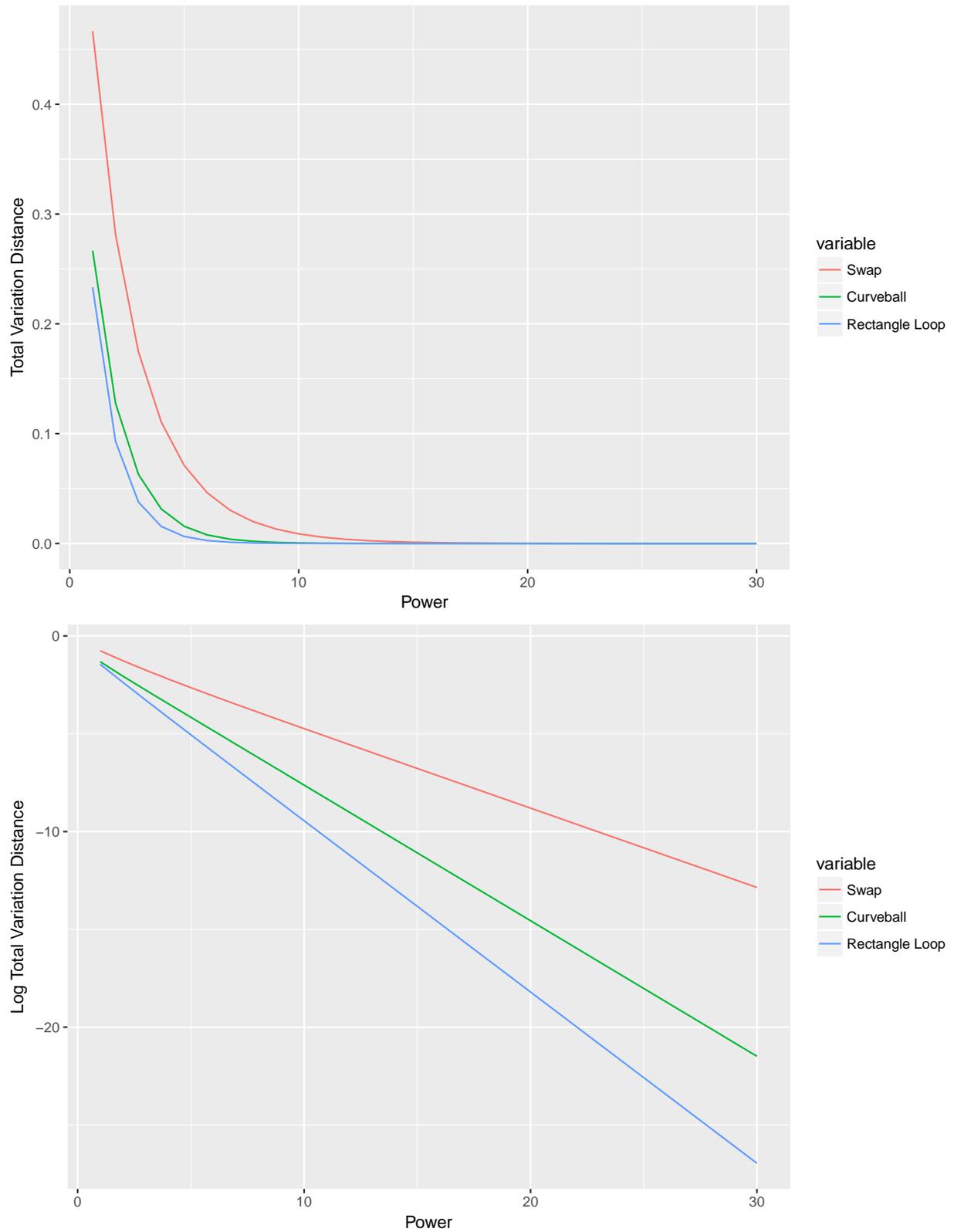}
	\caption{Comparison between swap algorithm, Curveball algorithm and Rectangle Loop algorithm. Top: relationship between total variation distance and the power of transition matrix. Bottom: relationship between logarithm of total variation distance and the power of transition matrix. }
	\label{fig:comparison_3by3}
\end{figure}
\subsection{Experiments on Empirical Mixing Time}
For larger matrices, it is infeasible to calculate the transition matrix theoretically. To provide empirical justification for the advantage of the Rectangle Loop algorithm. We have designed the experiment as follows. For each $p = 0.01, 0.05, 0.1, 0.2, 0.3, 0.4, 0.5$, a $100\times 100$ binary matrix is generated for which each entry has probability $p$ to be $1$. We ran each algorithm for $10000$ iterations and collected the corresponding number of swaps, as shown in Table \ref{tab:comparison_rectangle_swap}. When the filled portion $p$ is small, the Rectangle Loop algorithm is extremely efficient, producing more than $73$ times more swaps than the swap algorithm. For large $p$, the advantage of Rectangle Loop algorithm is reduced, but still very significant. For $p = 0.5$, the Rectangle Loop still produces $4$ times more swaps than the swap algorithm. Noteworthy, the zeros and ones play the symmetric rule in a binary matrix, therefore it is not necessary to generate the random matrix for $p > 0.5$.

The result above justifies our theoretical result that the Rectangle Loop algorithm converges faster than the swap algorithm. However, the above experiments did not consider the running time for each iteration. In fact, one iteration of Rectangle Loop algorithm is computationally more expensive than that of swap algorithm. To investigate this issue, we also record the time per swap for both algorithms, as shown in the last column of Table \ref{tab:comparison_rectangle_swap}. It turns out that the Rectangle Loop algorithm still has a significant advantage than the swap algorithm after the running time issue is taken into account. For $p = 0.01$, the Rectangle Loop is about $31$ times more efficient than the swap algorithm. Even for $p = 0.5$, Rectangle Loop algorithm is still about $4$ times more efficient than the swap algorithm.

\begin{table}[htbp]	\begin{center}
	\begin{tabular}{c|c|c|c|}
		\hline
		\multicolumn{1}{|c|}{Method}         & Filled portion          & Number of swaps & Time per swap (/s)   \\ \hline
		\multicolumn{1}{|c|}{Rectangle Loop} & \multirow{2}{*}{$1\%$}  & $586$           & $1.18\times 10^{-5}$ \\ \cline{3-4} 
		\multicolumn{1}{|c|}{Swap}           &                         & $8$             & $3.67\times 10^{-4}$ \\ \hline
		\multicolumn{1}{|c|}{Rectangle Loop} & \multirow{2}{*}{$5\%$}  & $977$           & $5.30\times 10^{-6}$ \\ \cline{3-4} 
		\multicolumn{1}{|c|}{Swap}           &                         & $42$            & $3.52\times 10^{-5}$ \\ \hline
		\multicolumn{1}{|c|}{Rectangle Loop} & \multirow{2}{*}{$10\%$} & $1838$          & $3.23\times 10^{-6}$ \\ \cline{3-4} 
		\multicolumn{1}{|c|}{Swap}           &                         & $156$           & $1.25\times 10^{-5}$ \\ \hline
		\multicolumn{1}{|c|}{Rectangle Loop} & \multirow{2}{*}{$20\%$} & $3271$          & $2.64\times 10^{-6}$ \\ \cline{3-4} 
		\multicolumn{1}{|c|}{Swap}           &                         & $509$           & $5.68\times 10^{-6}$ \\ \hline
		\multicolumn{1}{|c|}{Rectangle Loop} & \multirow{2}{*}{$30\%$} & $4222$          & $2.10\times 10^{-6}$ \\ \cline{3-4} 
		\multicolumn{1}{|c|}{Swap}           &                         & $803$           & $5.06\times 10^{-6}$ \\ \hline
		\multicolumn{1}{|c|}{Rectangle Loop} & \multirow{2}{*}{$40\%$} & $4794$          & $1.27\times 10^{-6}$ \\ \cline{3-4} 
		\multicolumn{1}{|c|}{Swap}           &                         & $1160$          & $4.98\times 10^{-6}$ \\ \hline
		\multicolumn{1}{|c|}{Rectangle Loop}      & \multirow{2}{*}{$50\%$} & $5080$          & $1.37\times 10^{-6}$ \\ \cline{3-4} 
		\multicolumn{1}{|c|}{Swap}                            &                         & $1271$          & $5.36\times 10^{-6}$ \\ \hline
	\end{tabular}
	\caption{The comparison between swap algorithm and Rectangle Loop algorithm. Each algorithm is implemented $10000$ iterations on $100\times 100$ matrices with different filled portions. The third column records the number of successful swaps among the $10000$ iterations, the last column records the average time per swap, respectively.}
	\label{tab:comparison_rectangle_swap}
\end{center}
\end{table}
We have also used the pertubation score suggested by Strona et.al. \cite{strona2014fast} to access convergence for both algorithm.  Pertubation score of a matrix is defined by the fraction of cells differing from the corresponding ones of the initial matrix. It takes several iterations for each algorithm to stabilize around its expectation. It is shown in Figure \ref{fig:Rectangle_vs_swap} that it takes less iterations and less time for the Rectangle Loop algorithm to stabilize, suggesting a faster mixing than the swap algorithm. 
\begin{figure}[htbp]
	\includegraphics[width= \textwidth]{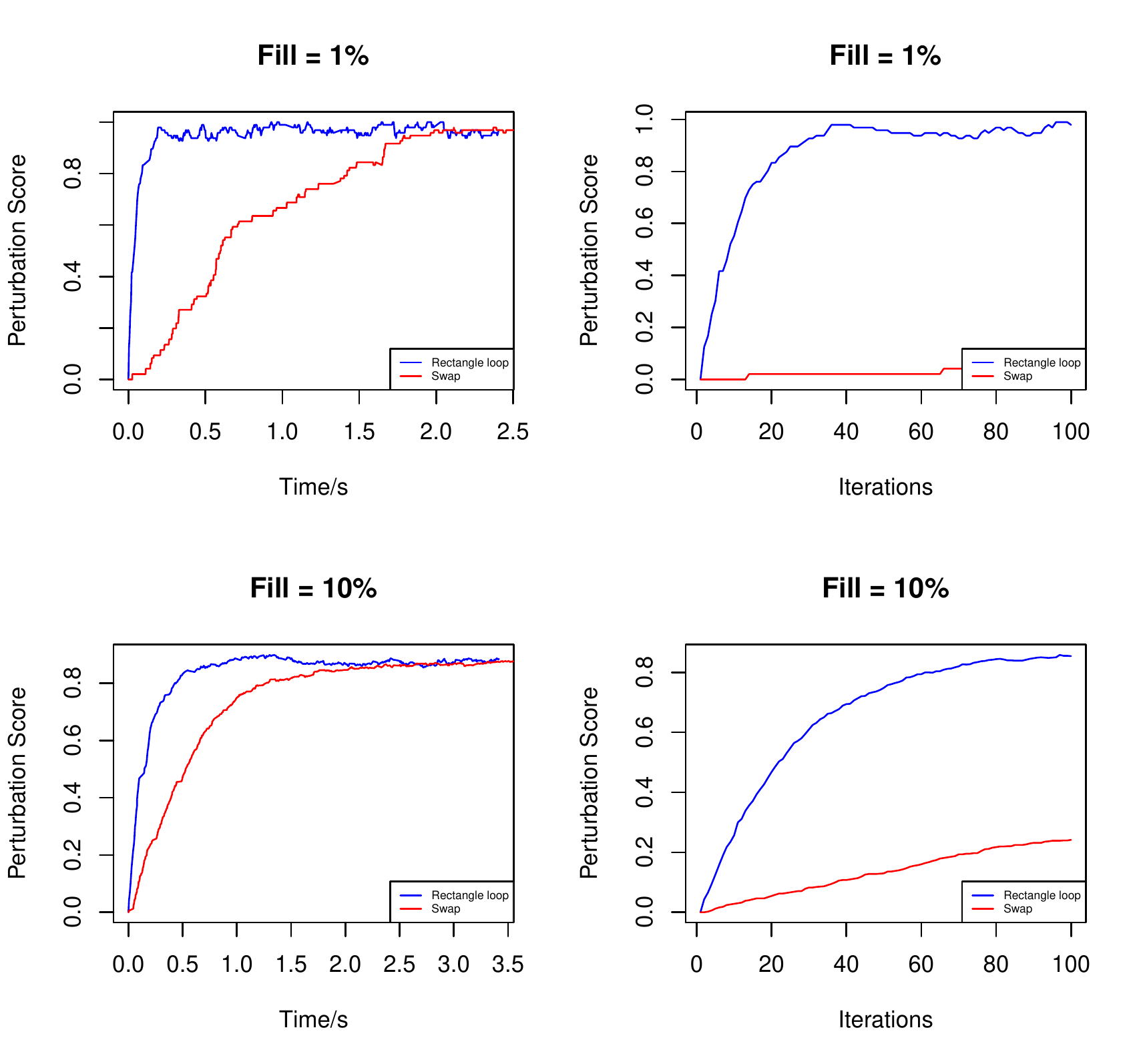}
	\caption{Comparison between swap algorithm and Rectangle Loop algorithm in mixing  $100\times100$ matrices. The left two subplots are the relationship between time and pertubation score for different filled portions. The right two subplots are the relationship between (every 100) iterations and pertubation score for different filled portions. Rectangle Loop algorithm is represented by blue curves and swap algorithm is represented by red curves.} 
	\label{fig:Rectangle_vs_swap}
\end{figure}
\subsection{Finch Data Applications}
Going back to `Darwin's Finch' dataset, we use the test statistics $\bar{S}^2$ suggested by Roberts and Stone \cite{roberts1990island} to compare the three algorithms. $\bar{S}^2$ is defined by 
\[
\bar{S}^2(A) = \frac{1}{m(m-1)}\sum_{i\neq j} s_{ij}^2,
\]
where $m$ is the number of rows of matrix $A$, $S = (s_{ij}) = A A^T$. For the finch data, $\bar{S}^2 = 45.03$. Suppose this number is too large or too small, comparing with its expectation over all the matrices having the same margins as finch data. We would like to conclude that the cooperation/competition do influence the distribution of species. To investigate this, we implemented the swap algorithm, Curveball algorithm and Rectangle Loop algorithm on the same data for $20000$ iterations, using its average as an estimator for $\bE(\barS^2)$. The results are shown in Figure \ref{fig:comparison_finch}. After $20000$ iterations, Rectangle Loop algorithm gives an estimate of $42.135$ with standard deviation $0.537$, swap algorithm gives an estimate of $42.126$ with standard deviation $0.509$, Curveball algorithm gives an estimate of $42.191$, with standard deviation $0.590$.
Therefore the observed data falls outside  the three standard deviation boundaries for all three algorithms, suggesting strong evidence that the observed occurence matrix is not just by chance. Meanwhile, both swap algorithm and Rectangle algorithm gives similar estimations and lower standard deviations, which seem to be more accurate than the Curveball algorithm. Lastly, there is a significant pattern in Figure \ref{fig:comparison_finch} that for both $\bar S^2$ and standard deviation estimation, the Rectangle Loop algorithm becomes stabilized much earilier than the swap algorithm, indicating a faster mixing.
\begin{figure}[htbp]
	\includegraphics[width= \textwidth]{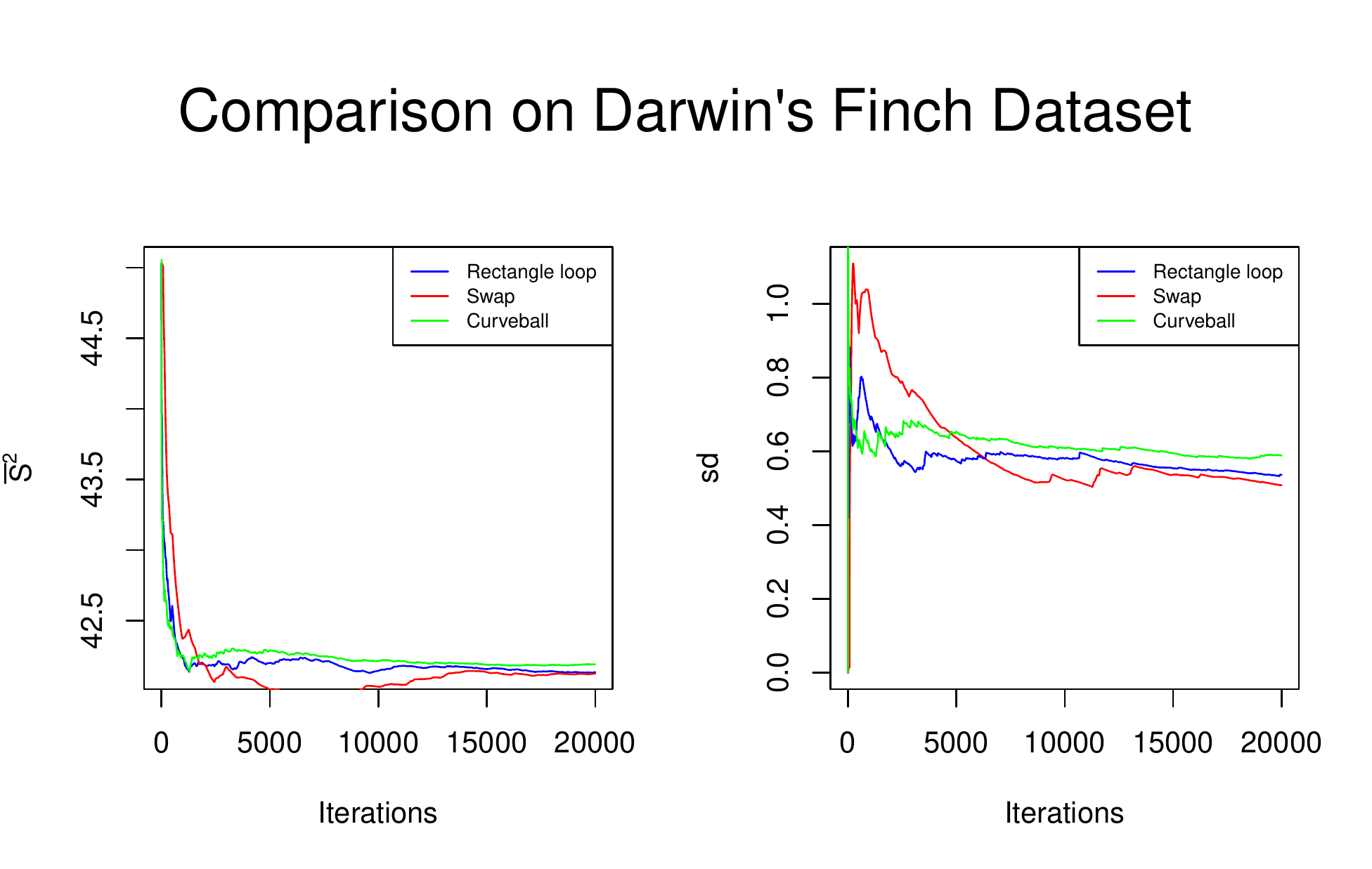}
	\caption{Comparison between swap algorithm, Curveball algorithm and Rectangle Loop algorithm. The left subplot is the relationship between the average of $\bar S^2$ with the number of iterations. The right subplot is the relationship between the standard deviation of $\bar S^2$ with the number of iterations. Blue, red and green curves represent Rectangle Loop, swap and Curveball algorithms respectively.}
	\label{fig:comparison_finch}
\end{figure}
\newpage
\section{Discussion}\label{sec:discussion}
There is a growing tendency to study the behavior of binary matrices with fixed margins in numerous scientific fields, ranging from mathematics to natural science to social science. For example, mathematicians and computer scientists are interested in the total number of configurations of given margin sums. Ecologists use the so-called \textit{occurence matrix} to model the presence/absence of species in different locations. Biologists use the binary matrix to model neuronal networks. Social scientists use the binary matrix for studying  social network features.

One of the central and difficult problems is uniformly sampling binary matrices with given margins. In this article we have developed the Rectangle Loop algorithm which is efficient, intuitive and easy to implement. Theoretically, the algorithm is superior to the classical swap algorithm in Peskun's order. In practice, the Rectangle Loop algorithm is notably more efficient than the swap approach. For a fixed number of iterations, Rectangle Loop algorithm produces $4-73$ times more successful swaps than the swap algorithm. For a fixed amount of time, Rectangle Loop algorithm still produces $4-31$ times more successful swaps than the swap algorithm. This suggests the Rectangle Loop algorithm is efficient both statistically and computationally. 

There are many other problems that remain.  From a theoretical point of view, it is important to give sharp bounds on the convergence speed of a given Markov chain. However, giving a useful running time estimate is often challenging in practical problems. It would be very interesting if the swap algorithm,  Curveball algorithm and the Rectangle Loop algorithm can be investigated analytically.  From an applied point of view, there are many factors that influence the performance of algorithms, such as  running time per step (swap algorithm is the fastest, while Curveball algorithm is the slowest),  initialization of the matrix,  size of the matrix,  ratio between row number and column numbers,  filled proportions. Our empirical studies suggest that all the factors have a significant impact on the convergence speed for all the algorithms. It would be beneficial if more numerical experiments are carried out, yielding a complete and comprehensive comparison between all the existing algorithms.

\newpage
\bibliographystyle{alpha}
\bibliography{bibliography}
\end{document}